\newcommand{\doi}[1]{\href{http://dx.doi.org/#1}{doi:\nolinkurl{#1}}}
\newcommand{\arxiv}[1]{\href{https://arxiv.org/abs/#1}{arXiv:\nolinkurl{#1}}}
\newcommand\toolname[1]{\mbox{\textsf{#1}}}
\newcommand\tool[1]{\toolname{#1}\xspace}
\newcommand\concon{\tool{ConCon}}
\newcommand\conconversion{1.5}
\newcommand\ceta{\tool{C\kern-0.2exe\kern-0.5exT\kern-0.5exA}}
\newcommand\csi{\tool{CSI}}
\newcommand\wm{\tool{Waldmeister}}
\newcommand\TRS{TRS\xspace}
\newcommand\CTRS{CTRS\xspace}
\newcommand\CTRSs{CTRSs\xspace}
\newcommand\mc[1]{\ensuremath{\mathcal{#1}}\xspace}
\newcommand\msf[1]{\ensuremath{\mathsf{#1}}\xspace}
\newcommand\FF{\mc{F}}
\newcommand\PP{\mc{P}}
\newcommand\RR{\mc{R}}
\newcommand\TT{\mc{T}}
\newcommand\VV{\mc{V}}
\newcommand\Pos{\PP\msf{os}}
\newcommand\Var{\ensuremath{\VV}\xspace}
\newcommand\RRu{\RR_{\msf{u}}}
\newcommand\rname{\rho}
\newcommand\IF{\Leftarrow}
\newcommand\EQ{\approx}
\newcommand\cond{\approx}
\newcommand\urule[2][]{\ell#1_{#2} \to r#1_{#2}}
\newcommand\crule[2][]{\urule[#1]{#2} \IF c#1_{#2}}
\newcommand\tcap[1][]{\ensuremath{\textsf{tcap}_{#1}}}
\newcommand\wrt{with respect to\xspace}
\newcommand\f[1]{\msf{#1}} 
\newcommand\lemref[1]{Lemma~\ref{lem:#1}}
\newcommand\propref[1]{Property~\ref{prop:#1}}
\newcommand\exref[1]{Example~\ref{ex:#1}}
\newcommand\figref[1]{Figure~\ref{fig:#1}}
\newcommand\cstepn[2][]{\xr[#1]{#2}}
\newcommand\cmd[1]{%
\begin{flushleft}
  \qquad \texttt{#1}
\end{flushleft}%
}
\newcommand{\r@rrow}[3]{%
  \newcommand{#1}[2][]{%
    \def\next{#2\@ifempty{##1}{}{_{##1}}\@ifempty{##2}{}{^{##2}}}%
    \mathchoice{#3[##1]{##2}}{\next}{\next}{\next}%
  }%
}
\newcommand{\l@rrow}[3]{%
  \newcommand{#1}[2][]{%
    \def\next####1{%
      \setbox0=\hbox{$####1\vphantom{#2}\@ifempty{##1}{}{_{\vphantom{##1}}}%
      \@ifempty{##2}{}{^{##2}}$}%
      \setbox1=\hbox{$####1\vphantom{#2}\@ifempty{##1}{}{_{##1}}%
      \@ifempty{##2}{}{^{\vphantom{##2}}}$}%
      \setbox2=\vbox{\hbox to\wd0{}\hbox to\wd1{}}%
      \mathrel{\hskip\wd2\hskip-\wd0\box0\hskip-\wd1\box1{#2}}%
    }%
    \mathchoice{#3[##1]{##2}}{\next\textstyle}%
    {\next\scriptstyle}{\next\scriptscriptstyle}%
  }%
}
\l@rrow{\xl}{\leftarrow}{\xleftarrow}
\r@rrow{\xr}{\rightarrow}{\xrightarrow}
\newcommand{\xsr}{\rightsquigarrow}
\newcounter{counter}
\theoremstyle{definition}
\newtheorem{definition}[counter]{Definition}
\theoremstyle{plain}
\newtheorem{lemma}[counter]{Lemma}
\newtheorem{property}[counter]{Property}
\theoremstyle{remark}
\newtheorem{example}[counter]{Example}
\title{Certified Non-Confluence with {\concon~\conconversion}%
\thanks{This work is supported by FWF (Austrian Science Fund) project P27502.}}
\author{
Thomas Sternagel
\and
Christian Sternagel
}
\institute{
  University of Innsbruck,
  Austria\\
  \email{\{thomas,christian\}.sternagel@uibk.ac.at}
 }
\authorrunning{Sternagel and Sternagel}
\titlerunning{Certified Non-Confluence using {\concon~\conconversion}}
\begin{document}

\maketitle


%
We present three methods to check \CTRSs for non-confluence:
(1) an ad hoc method for 4-\CTRSs,
(2) a specialized method for unconditional critical  pairs,
and finally,
(3) a method that employs conditional narrowing to find non-confluence
witnesses.
%
We shortly describe our implementation
of these methods
in {\concon}~\cite{SM14}, 
then look into their certification with 
{\ceta}~\cite{TS09}, and finally conclude with
experiments on the confluence problems database (Cops).%
\footnote{%
  \url{http://cops.uibk.ac.at/?q=ctrs+oriented}
}

\section{Preliminaries}
\label{sec:prelim}

We assume familiarity with the basic notions of (conditional) term
rewriting~\cite{BN98,O02}, but shortly recapitulate terminology and notation
that we use in the remainder.
Given an arbitrary binary relation $\xr[]{}$, we write $\xl[]{}$, $\xr[]{+}$,
$\xr[]{*}$ for \emph{inverse}, \emph{transitive closure}, and \emph{reflexive
transitive closure}, respectively.
We use $\Var(\cdot)$ to denote the set of variables occurring in a given
syntactic object, like a term, a pair of terms, a list of terms, etc.
The set of terms $\TT(\FF,\VV)$ over a given signature of function symbols $\FF$
and set of variables $\VV$ is defined inductively:
$x \in \TT(\FF, \VV)$ for all variables $x \in \VV$,
and for every $n$-ary function symbol $f \in \FF$ and terms $t_1,\ldots,t_n \in
\TT(\FF,\VV)$ also $f(t_1,\ldots,t_n) \in \TT(\FF,\VV)$.
We say that terms $s$ and $t$ \emph{unify} if $s\sigma = t\sigma$ for some
substitution~$\sigma$.
The topmost part of a term that does not change under rewriting (sometimes
called its ``cap'') can be approximated for example by the $\tcap$
function~\cite{GTS05}. Informally, $\tcap(x)$ for a variable $x$ results in a
fresh variable, while $\tcap(t)$ for a non-variable term $t = f(t_1,\ldots,t_n)$
is obtained by recursively computing $u = f(\tcap(t_1),\ldots,\tcap(t_n))$ and
then asserting $\tcap(t) = u$ in case $u$ does not unify with any left-hand side
of rules in the \TRS $\RR$, and a fresh variable, otherwise.
We call a bijective variable substitution $\pi : \VV \to \VV$ a \emph{(variable)
renaming}.
A \emph{\CTRS} $\RR$ is a set of conditional rewrite rules of the shape
$\crule{}$ where $\ell$ and $r$ are terms and $c$ is a (possibly empty) sequence
of pairs of terms $s_1 \cond t_1, \ldots, s_k \cond t_k$.
For all rules in $\RR$ we have that $\ell \notin \VV$.
If additionally $\Var(r) \subseteq \Var(\ell,c)$ for all rules we call $\RR$ a
3-\CTRS otherwise a 4-\CTRS.
We restrict our attention to \emph{oriented} \CTRSs where conditions are
interpreted as reachability requirements.
The rewrite relation induced by an oriented \CTRS $\RR$ is structured into
\emph{levels}. For each level $i$, a \TRS $\RR_i$ is defined recursively as
follows:
$\RR_0 = \varnothing$, and
$\RR_{i+1} = \{\ell\sigma \to r\sigma \mid \crule{} \in \RR,
  \forall s \cond t \in c.\, s\sigma \cstepn[\RR_i]{*} t\sigma\}$.
The rewrite relation of $\RR$ is defined as
${\xr[\RR]{}} = {\bigcup_{i \geqslant 0}\xr[\RR_i]{}}$.
By dropping all conditions from a \CTRS~$\RR$ we obtain its \emph{underlying
\TRS}, denoted $\RRu$.
Note that ${\to_{\RR}}\subseteq{\to_{\RRu}}$.
We sometimes label rules like $\rname\colon \crule{}$.
Two variable-disjoint variants of rules $\crule{1}$ and $\crule{2}$ in
$\RR$ such that $\ell_1|_p \notin \VV$ and $\ell_1|_p\mu = \ell_2\mu$ with most
general unifier (mgu)~$\mu$, constitute a \emph{conditional overlap}.
A conditional overlap that does not result from overlapping two variants of the
same rule at the root, gives rise to a \emph{conditional critical pair} (CCP)
$\ell_1[r_2]_p\mu \EQ r_1\mu \IF c_1\mu,c_2\mu$.
A CCP $u \cond v \IF c$ is \emph{joinable} if
$u\sigma \xr[\RR]{*} \cdot \xl[\RR]{*} v\sigma$ for
all substitutions $\sigma$ such that $s\sigma \xr[\RR]{*} t\sigma$ for all $s
\cond t \in c$.
Moreover, a CCP $u \cond v \IF c$ is \emph{infeasible} if there is no
substitutions $\sigma$ such that $s\sigma \xr[\RR]{*} t\sigma$ for all $s \cond
t \in c$.
%

\section{Finding Witnesses for Non-Confluence of \CTRSs}
\label{sec:main}

To prove non-confluence of a \CTRS we have to find a witness, that is,
two diverging rewrite sequences starting at the
same term whose end points are not joinable.

The first criterion only works for \CTRSs that contain at least one
unconditional rule of type 4, that is, with extra-variables in the right-hand
side.

\begin{lemma}
  \label{lem:urnf}

  Given a 4-\CTRS $\RR$ and an unconditional rule $\rname\colon\urule{}$ in
  $\RR$ where $\Var(r) \nsubseteq \Var(\ell)$ and $r$ is a normal form \wrt
  $\RRu$ then $\RR$ is non-confluent.
\end{lemma}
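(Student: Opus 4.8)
The plan is to exploit the extra variable in $r$ as a source of nondeterminism: a single unconditional application of $\rname$ can send the same term to two \emph{different} right-hand side instances, and by choosing these instances to be distinct normal forms we obtain a non-joinable peak. Concretely, since $\Var(r) \nsubseteq \Var(\ell)$, I would fix a variable $x \in \Var(r) \setminus \Var(\ell)$ together with two fresh, distinct variables $y_1 \neq y_2$ (so $y_1, y_2 \notin \Var(r)$), and for $i \in \{1,2\}$ let $\sigma_i$ be the renaming that swaps $x$ with $y_i$, so that $r\sigma_i = r\{x \mapsto y_i\}$. Because $\rname$ is unconditional, every instance $\ell\sigma_i \to r\sigma_i$ belongs to the rewrite relation of $\RR$, and since $x \notin \Var(\ell)$ we have $\ell\sigma_1 = \ell\sigma_2 = \ell$. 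This produces the peak $r\sigma_1 \xl[\RR]{} \ell \xr[\RR]{} r\sigma_2$, two diverging steps out of the common term $\ell$. As $x$ occurs in $r$ and $y_1 \neq y_2$, the two results differ at the position of $x$, hence $r\sigma_1 \neq r\sigma_2$.

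It then remains to argue that $r\sigma_1$ and $r\sigma_2$ are not joinable, and I would do this by showing both are normal forms. By assumption $r$ is a normal form \wrt $\RRu$, and each $\sigma_i$ is a variable renaming, which preserves normal forms; hence each $r\sigma_i$ is again an $\RRu$-normal form, and therefore an $\RR$-normal form because ${\to_{\RR}} \subseteq {\to_{\RRu}}$. Since a normal form reduces only to itself, two \emph{distinct} normal forms can have no common reduct, so the peak above is non-joinable and witnesses non-confluence of $\RR$.

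The one step requiring genuine care -- and the only place the argument could quietly break -- is the claim that the substitution preserves normal forms. This fails for arbitrary substitutions (instantiating a variable can complete a left-hand side and thereby create a fresh redex), which is exactly why I instantiate the extra variable with a fresh variable via a renaming rather than with a general term. For an injective variable renaming one checks that matching against a left-hand side is invariant: renaming introduces no new non-variable structure, so any redex in $r\sigma_i$ would already be present in $r$. The delicate subcase is that of non-left-linear rules, where injectivity of $\sigma_i$ is precisely what prevents two distinct argument variables from being accidentally identified and completing a non-linear left-hand side.
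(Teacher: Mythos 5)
Your proof is correct and takes essentially the same approach as the paper: instantiate the extra variable(s) of $r$ by two different renamings, obtain the peak $r\sigma_1 \xl[\rname]{} \ell \xr[\rname]{} r\sigma_2$ with $r\sigma_1 \neq r\sigma_2$, and conclude non-joinability because both sides are distinct $\RRu$-normal forms (hence $\RR$-normal forms). Your extra care in justifying that an injective variable renaming preserves normal forms is a detail the paper leaves implicit, but it is the same argument.
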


\begin{proof}
  Since $\Var(r) \nsubseteq \Var(\ell)$ we can always find two renamings $\mu_1$
  and $\mu_2$ restricted to $\Var(r) \setminus \Var(\ell)$ such that $r\mu_1
  \xl[\rname]{} \ell\mu_1 = \ell\mu_2 \xr[\rname]{} r\mu_2$ and $r\mu_1 \neq
  r\mu_2$. As $r$ is a normal form \wrt $\RRu$ also $r\mu_1$ and $r\mu_2$ are
  (different) normal forms \wrt $\RRu$ (and hence also \wrt $\RR$).
  Because we found a non-joinable peak $\RR$ is non-confluent.
\end{proof}

\begin{example}
  Consider the second (unconditional) rule of the 4-\CTRS $\RR_{320} = \{
    \f{e} \to \f{f}(x) \IF \f{l} \cond \f{d},\
    \f{A} \to \f{h}(x, x)
  \}$
  from Cops.
  Its right-hand side $\f{h}(x,x)$ is a normal
  form \wrt the underlying \TRS of $\RR_{320}$ and the only variable occurring
  in it does not appear in its left-hand side $\f{A}$.
  So by \lemref{urnf} $\RR_{320}$ is non-confluent.
\end{example}

A natural candidate for diverging situations are the critical peaks of a \CTRS.
We will base our next criterion on the analysis of \emph{unconditional}
critical pairs (CPs) of \CTRSs. This restriction is necessary to guarantee the
existence of the actual peak. If we would also allow conditional CPs, we first
would have to check for infeasibility, since infeasibility is undecidable in
general these checks are potentially very costly (see for example~\cite{SM15}).

\begin{lemma}
  \label{lem:ucpnj}

  Given a \CTRS $\RR$ and an unconditional CP $s \EQ t$ of it. If $s$ and $t$
  are not joinable \wrt $\RRu$ then $\RR$ is non-confluent.
\end{lemma}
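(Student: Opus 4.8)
The plan is to convert the unconditional critical pair into a genuine, non-joinable peak of $\RR$, exploiting that unconditional rules act already at level~$1$, and then to transport non-joinability from $\RRu$ to $\RR$ along the inclusion ${\xr[\RR]{}}\subseteq{\xr[\RRu]{}}$. First I would unfold the definition of the critical pair. Since $s \EQ t$ is \emph{unconditional}, the condition part $c_1\mu,c_2\mu$ of the underlying conditional critical pair is empty, which forces both contributing rules to be unconditional, say $\rname_1\colon\urule{1}$ and $\rname_2\colon\urule{2}$, taken as variable-disjoint variants, overlapping at a position~$p$ with $\ell_1|_p \notin \VV$ via the mgu~$\mu$ of $\ell_1|_p$ and $\ell_2$, so that $s = \ell_1[r_2]_p\mu$ and $t = r_1\mu$.

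The crucial observation is that unconditional rules fire in $\RR$ without any feasibility check: for a rule with empty condition the level requirement is vacuous, so $\ell\sigma \to r\sigma \in \RR_1 \subseteq {\xr[\RR]{}}$ for every substitution~$\sigma$. Instantiating at $\ell_1\mu$ and using $(\ell_1\mu)|_p = \ell_1|_p\mu = \ell_2\mu$, I would exhibit the critical peak
\begin{equation*}
  s = \ell_1[r_2]_p\mu \xl[\RR]{} \ell_1\mu \xr[\RR]{} r_1\mu = t,
\end{equation*}
applying $\rname_2$ at position~$p$ on the left and $\rname_1$ at the root on the right. This is exactly the ``actual peak'' whose existence the restriction to unconditional pairs is meant to guarantee.

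It then remains to show that $s$ and $t$ are not joinable \wrt $\RR$. For this I would invoke the inclusion ${\xr[\RR]{}}\subseteq{\xr[\RRu]{}}$ from the preliminaries, which lifts to reflexive transitive closures, ${\xr[\RR]{*}}\subseteq{\xr[\RRu]{*}}$. Hence any $\RR$-valley $s \xr[\RR]{*} \cdot \xl[\RR]{*} t$ would also be an $\RRu$-valley, contradicting the hypothesis that $s$ and $t$ are not joinable \wrt $\RRu$. So they are not joinable \wrt $\RR$ either, the displayed peak is non-joinable, and $\RR$ is therefore non-confluent.

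Most of this is routine; the single step that needs care --- and the reason the lemma is stated for unconditional pairs only --- is the construction of the peak. For a genuinely conditional overlap the analogous divergence would materialize only under a substitution satisfying $c_1\mu$ and $c_2\mu$, i.e.\ under a feasibility assumption, and it is precisely unconditionality that removes this obstacle.
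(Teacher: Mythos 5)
Your proposal is correct and follows essentially the same route as the paper's proof: extract the two unconditional rules behind the CP, exhibit the peak $s \xl[\RR]{} \ell_1\mu \xr[\RR]{} t$ (which exists unconditionally since the rules have empty conditions), and transfer non-joinability from $\RRu$ to $\RR$ via ${\xr[\RR]{}}\subseteq{\xr[\RRu]{}}$. You merely spell out in more detail the points the paper leaves implicit, such as why unconditional rules fire at level~$1$ under every substitution.
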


\begin{proof}

  The CP $s \EQ t$ originates from a critical overlap between two
  unconditional rules $\rname_1\colon\urule{1}$ and $\rname_2\colon\urule{2}$
  for some mgu $\mu$ of $\ell_1|_p$ and $\ell_2$ such that
  $s = \ell_1\mu[r_2\mu]_p \xl[]{} \ell_1\mu[\ell_2\mu]_p \xr[]{} r_1\mu = t$.
  Since $s$ and $t$ are not joinable \wrt $\RRu$ they are of course also not
  joinable \wrt $\RR$ and we have found a non-joinable peak.
  So $\RR$ is non-confluent.
\end{proof}

\begin{example}
  \label{ex:ucpnj}

  Consider the 3-\CTRS~$\RR_{271} = \{
    \f{p}(\f{q}(x)) \to \f{p}(\f{r}(x)),\
    \f{q}(\f{h}(x)) \to \f{r}(x),\
    \f{r}(x) \to \f{r}(\f{h}(x)) \IF \f{s}(x) \cond \f{0},\
    \f{s}(x) \to \f{1}
  \}$
  from Cops.
  First of all we can immediately drop the third rule because we can never
  satisfy its condition and so it does not influence the rewrite relation of
  $\RR_{271}$.
  This results in the \TRS $\RR_{271}'$.
  Now the left- and right-hand sides of the unconditional CP
  $\f{p}(\f{r}(z)) \EQ \f{p}(\f{r}(\f{h}(z)))$
  are not joinable because they are two different normal forms with respect to
  the underlying \TRS of $\RR_{271}'$.
  Hence $\RR_{271}$ is not confluent by \lemref{ucpnj}.
\end{example}

While the above lemmas are easy to check and we have fast methods to do so they
are also rather ad hoc.
A more general but potentially very expensive way to search for non-joinable
forks is to use conditional narrowing~\cite{MH94}.
%
%
\begin{definition}[Conditional narrowing]
  Given a \CTRS $\RR$ we say that $s$ \emph{(conditionally) narrows}
  to $t$, written $s \xsr_\sigma t$ if there is a variant of a rule
  $\rname\colon\crule{} \in \RR$,
  such that $\Var(s) \cap \Var(\rname) = \varnothing$ and
  $u \xsr_\sigma^* v$ for all $u \EQ v \in c$,
  a position $p\in\Pos_\FF(s)$,
  a unifier%
  \footnote{%
    In our implementation we
    start from an mgu of $s|_p$ and $\ell$ and
    extend it while trying to satisfy the conditions.
  }
  $\sigma$ of $s|_p$ and $\ell$,
  and $t = s[r]_p\sigma$.
  For a narrowing sequence
  $s_1 \xsr_{\sigma_1} s_2 \xsr_{\sigma_2} \cdots \xsr_{\sigma_{n-1}} s_n$
  of length $n$ we write $s_1 \xsr_\sigma^n s_n$
  where $\sigma = \sigma_1 \sigma_2 \cdots \sigma_{n-1}$.
  If we are not interested in the length we also write $s \xsr_{\sigma}^* t$.
\end{definition}
The following property of narrowing carries over from the unconditional case:
\begin{property}
  \label{prop:narrsound}

  If $s \xsr_\sigma t$ then $s\sigma \xr[]{} t\sigma$ with the same rule that
  was employed in the narrowing step.
  Moreover, if
  $s_1 \xsr_{\sigma_1} s_2 \xsr_{\sigma_2} \cdots \xsr_{\sigma_{n-1}} s_n$
  then
  $s_1 \sigma_1\sigma_2\cdots\sigma_{n-1} \xr[]{} s_2 \sigma_2\cdots\sigma_{n-1} \xr[]{}
  \cdots \xr[]{} s_n$.
  Again employing the same rule for each rewrite step as in the corresponding
  narrowing step.
\end{property}
Using conditional narrowing we can now formulate a more general non-confluence
criterion.
\begin{lemma}
  \label{lem:njf}

  Given a \CTRS $\RR$,
  if we can find two narrowing sequences $u \xsr_\sigma^* s$ and
  $v \xsr_\tau^* t$ such that $u\sigma\mu = v\tau\mu$ for some mgu $\mu$ and
  $s\sigma\mu$ and $t\tau\mu$ are not $\RRu$-joinable
  then $\RR$
  is non-confluent.
\end{lemma}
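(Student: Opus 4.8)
The plan is to turn the two narrowing sequences into genuine rewrite sequences using \propref{narrsound}, and then verify that the resulting peak is non-joinable. First I would apply \propref{narrsound} to the narrowing sequence $u \xsr_\sigma^* s$. This yields a rewrite sequence $u\sigma \xr[\RR]{*} s\sigma$. Similarly, from $v \xsr_\tau^* t$ I obtain $v\tau \xr[\RR]{*} t\tau$. The key observation is that narrowing is stable under further substitution in the sense that we may instantiate both sides of a rewrite step by an arbitrary substitution and still have a valid rewrite step. Hence, applying the mgu $\mu$ throughout, I get $u\sigma\mu \xr[\RR]{*} s\sigma\mu$ and $v\tau\mu \xr[\RR]{*} t\tau\mu$.

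Next I would exploit the unification condition $u\sigma\mu = v\tau\mu$. Setting $w = u\sigma\mu = v\tau\mu$, the two instantiated rewrite sequences both start from the common term $w$, giving a fork
\[
  s\sigma\mu \xl[\RR]{*} w \xr[\RR]{*} t\tau\mu.
\]
This is exactly a diverging pair of rewrite sequences starting from the same term, as required by the definition of a non-confluence witness.

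Finally I would argue non-joinability. By assumption $s\sigma\mu$ and $t\tau\mu$ are not joinable \wrt $\RRu$. Since ${\to_{\RR}} \subseteq {\to_{\RRu}}$ (as noted in the preliminaries), any common $\RR$-reduct would also be a common $\RRu$-reduct; so non-joinability \wrt $\RRu$ implies non-joinability \wrt $\RR$. Therefore the fork above cannot be closed in $\RR$, and $\RR$ is non-confluent.

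The main obstacle I expect is the careful handling of substitutions when pushing $\mu$ through the rewrite sequences obtained from \propref{narrsound}. The property as stated produces $u\sigma \xr[\RR]{*} s\sigma$ with the intermediate substitutions composed along the sequence; one must check that instantiating every term in such a sequence by a common further substitution $\mu$ preserves each rewrite step (this uses closure of $\xr[\RR]{}$ under substitutions, which holds because each step in \propref{narrsound} uses an actual rule instance). Establishing this stability cleanly, together with the variable-disjointness bookkeeping that guarantees $\sigma$, $\tau$, and $\mu$ compose sensibly, is the only nontrivial part; the non-joinability argument then follows immediately from ${\to_{\RR}} \subseteq {\to_{\RRu}}$.
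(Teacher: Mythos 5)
Your proposal is correct and follows essentially the same route as the paper's proof: apply \propref{narrsound} to obtain the two rewrite sequences, instantiate with $\mu$ using closure of rewriting under substitutions to form the peak at $u\sigma\mu = v\tau\mu$, and conclude via ${\to_{\RR}} \subseteq {\to_{\RRu}}$ that non-joinability \wrt $\RRu$ implies non-joinability \wrt $\RR$. (Minor wording quibble: the relevant stability property is of \emph{rewriting}, not of narrowing, under substitution --- as your own explanation in fact makes clear.)
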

\begin{proof}
  Employing \propref{narrsound} we immediately get the two rewriting sequences
  $u\sigma \xr[\RR]{*} s\sigma$ and
  $v\tau \xr[\RR]{*} t\tau$.
  Since rewriting is closed under substitutions we have
  $s\sigma\mu \xl[\RR]{*} u\sigma\mu = v\tau\mu \xr[\RR]{*} t\tau\mu$.
  As the two endpoints of these forking sequences $s\sigma\mu$ and $t\tau\mu$
  are not joinable by $\RRu$ they are certainly also not joinable by $\RR$.
  This establishes non-confluence of the \CTRS $\RR$.
\end{proof}
\begin{example}
  \label{ex:njf}

  Consider the 3-\CTRS~$\RR_{262} = \{
    0 + y \to y,\
    \f{s}(x) + y \to x + \f{s}(y),\
    \f{f}(x, y) \to z \IF x + y \cond z + z'
  \}$
  from Cops.
  Starting from a variant of the left-hand side of the third rule
  $u = \f{f}(x',y')$
  we have a narrowing sequence
  $\f{f}(x',y') \xsr_\sigma x_1$
  using the variant
  $\f{f}(x_1,x_2) \to x_3 \IF x_1 + x_2 \cond x_3 + x_4$
  of the third rule and the substitution
  $\sigma = \{ x' \mapsto x_1, x_3 \mapsto x_1, x_4 \mapsto x_2 \}$.
  We also have another narrowing sequence
  $\f{f}(x',y') \xsr_\tau x_3$
  using the same variant of rule three and substitution
  $\tau = \{ x \mapsto x_3 + x_4, x' \mapsto \f{0}, y' \mapsto x_3 + x_4,
  x_1 \mapsto \f{0}, x_2 \mapsto x_3 + x_4 \}$
  where for the condition
  $x_1 + x_2 \cond x_3 + x_4$
  we have the narrowing sequence
  $x_1 + x_2 \xsr_{\tau} x_3 + x_4$,
  using a variant of the first rule
  $\f{0} + x \to x$.
  Finally, there is an mgu
  $\mu = \{ x_1 \mapsto \f{0}, x_2 \mapsto x_3 + x_4 \}$
  such that $u\sigma\mu = \f{f}(\f{0}, x_3 + x_4) = u\tau\mu$.
  Moreover,
  $x_1\sigma\mu = \f{0}$
  and
  $x_3\tau\mu = x_3$
  are two different normal forms.
  Hence $\RR_{262}$ is non-confluent by \lemref{njf}.
\end{example}

\section{Implementation}
\label{sec:impl}

Starting from its first participation in the confluence competition (CoCo)%
\footnote{%
  \url{http://coco.nue.riec.tohoku.ac.jp}
}
in 2014 {\concon~1.2.0.3} came equipped with some non-confluence heuristics.
Back then it only used Lemmas~\ref{lem:urnf} and \ref{lem:ucpnj}
and had no support for certification of the output.
In the next two years ({\concon~1.3.0} and~1.3.2) we focused on other
developments~\cite{SS15,SM15,SS16,SS_IWC16}
and nothing changed for the non-confluence part.
For this year's CoCo (2017) we have added \lemref{njf} employing conditional
narrowing to {\concon~\conconversion} and the output of all of the
non-confluence methods is now certifiable by \ceta.

Our implementation of \lemref{urnf} takes an unconditional rule
$\rname\colon\urule{}$, a substitution~$\sigma = \{ x \mapsto y\}$ with
$x\in\Var(r)\setminus\Var(\ell)$ and $y$ fresh w.r.t.\ $\rname$ and builds the
non-joinable fork $r \xl[\rname]{} \ell \xr[\rname]{} r\sigma$.

For \lemref{ucpnj} we have three concrete implementations that consider
an overlap from which an unconditional CP $s \EQ t$ arises:
The first of which just takes this overlap and then checks that $s$ and $t$ are
two different normal forms \wrt $\RRu$.
The second employs the {\tcap}-function to check for non-joinability, that is,
it checks whether $\tcap(s)$ and $\tcap(t)$ are not unifiable.
The third makes a special call to the \TRS confluence checker
{\csi}~\cite{ZFM11} providing the underlying \TRS $\RRu$ as well as the
unconditional CP $s \EQ t$ where all variables in $s$ and $t$ have been replaced
by fresh constants.
We issue the following command:
\cmd{csi -s '(nonconfluence -nonjoinability -steps 0 -tree)[30]' -C RT}
The strategy `\texttt{(nonconfluence -nonjoinability -steps 0 -tree)[30]}'
tells \csi to check non-joinability of two terms using tree automata techniques.
Here `\texttt{-steps 0}' means that {\csi} does not rewrite the input terms
further before checking non-joinability (this would be unsound in our setting).
The timeout is set to 30 seconds.
To encode the two terms for which we want to check non-joinability in the input
we set {\csi} to read relative-rewriting input (`\texttt{-C RT}').
We provide $\RRu$ in the usual Cops-format and add one line for the
CP $s \EQ t$ where its ``grounded'' left- and right-hand sides are related by
`\texttt{->=}', that is, we encode it as a relative-rule.
This is necessary to distinguish the unconditional CP from the rewrite rules.

Now, for an implementation of \lemref{njf} we have to be careful to respect the
freshness requirement of the variables in the used rule for every narrowing step
\wrt all the previous terms and rules.
The crucial point is to efficiently find the two narrowing sequences,
to this end we first restrict the set of terms from which to start narrowing.
As a heuristic we only consider the left-hand sides of rules of the \CTRS under
consideration.
Next we also prune the search space for narrowing.
Here we restrict the length of the narrowing sequences to at most three.
In experiments on Cops allowing sequences of length four or more did not yield
additional non-confluence proofs but slowed down the tool significantly to the
point where we lost other proofs.
Further, we also limit the recursion depth of conditional narrowing by
restricting the level (see the definition of the conditional rewrite relation in
the Preliminaries) to at most two.
Again, we set this limit as tradeoff after thorough experiments on Cops.
Finally, we use \propref{narrsound} to translate the forking narrowing sequences
into forking conditional rewriting sequences.
In this way we generate a lot of forking sequences so we only use fast methods,
like non-unifiability of the {\tcap}'s of the endpoints or that they are
different normal forms, to check for non-joinability of the endpoints.
Calls to {\csi} are to expensive in this context.


\section{Certification}
\label{sec:cert}

Certification
is quite similar for all of the
described methods.
We have to provide a non-confluence witness, that is, a non-joinable fork.
So besides the \CTRS $\RR$ under investigation we also need to provide the
starting term $s$, the two endpoints of the fork $t$ and $u$, as well as,
certificates for $s \xr[\RR]{+} t$ and $s \xr[\RR]{+} u$, and a certificate that
$t$ and $u$ are not joinable.
For the forking rewrite sequences we reuse a recent formalization of ours to
build the certificates (see~\cite{SS_CADE17}).
We also want to stress that because of \propref{narrsound} we did not have to
formalize conditional narrowing because going from narrowing to rewrite
sequences is already done in {\concon} and in the certificate only the rewrite
sequences show up.
For the non-joinability certificate of $t$ and $u$ there are three options:
either we state that $t$ and $u$ are two different normal forms or that
$\tcap(t)$ and $\tcap(u)$ are not unifiable; both of these checks are performed
within {\ceta}; or, when the witness was found by an external call to
{\csi}, we just include the generated non-joinability certificate.

\section{Experiments}
\label{sec:exp}

We tested the above non-confluence methods on all 129 oriented \CTRSs from Cops
(as of 2017-06-27).
Most of those are 3-\CTRSs but there are also four 4-\CTRSs.
On the whole {\concon~\conconversion} and {\ceta} are able to certify
non-confluence of 42 systems (including all 4-\CTRSs).
Last year's version of {\concon} can show non-confluence of 30 of the same 129
\CTRSs only using the implementation of Lemmas~\ref{lem:urnf} and
\ref{lem:ucpnj}.
By first removing infeasible rules, a feature which we have also recently
implemented in {\concon}, we gain another system (\texttt{271},
see~\exref{ucpnj}).
Another new feature is inlining of conditions (see~\cite{SS_CADE17}) which gives
two more systems (\texttt{351}, \texttt{353}).
Finally, with the help of conditional narrowing (\lemref{njf}) we gain another 9
systems (\texttt{272}, \texttt{328}, \texttt{330}, \texttt{352}, \texttt{391},
\texttt{404}, \texttt{410}, \texttt{411}, \texttt{524}).
In contrast {\concon~\conconversion} succeeds in showing confluence of 60
systems (where 7 use methods that are not certifiable yet, like using \wm to
show infeasibility of CCPs).
So from {\concon's} perspective only 27 of the 129 oriented \CTRSs of Cops are
still open.
These results are summarized in \figref{exp}.

Concerning future work we currently employ standard conditional narrowing in our
implementation. Implementing basic conditional narrowing or LSE conditional
narrowing should increase efficiency and maybe yield new \texttt{NO}-instances.

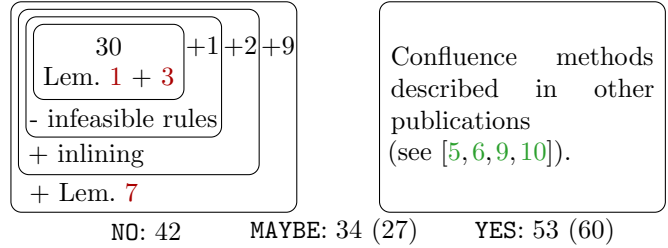
\begin{figure}
  \centering
  \begin{tikzpicture}
    \node {Lem.~\ref{lem:urnf} + \ref{lem:ucpnj}};
    \node at (0,0.4) {30};
    \node[rounded corners, shape=rectangle, draw, minimum width=20mm,
      minimum height=10mm] at (0,0.2) {};
    \node[rounded corners, shape=rectangle, draw, minimum width=26mm,
      minimum height=16mm] at (0.2,-0.0) {};
    \node at (0.2,-0.55) {- infeasible rules};
    \node at (1.25,0.4) {+1};
    \node[rounded corners, shape=rectangle, draw, minimum width=32mm,
      minimum height=22mm] at (0.4,-0.2) {};
    \node at (-0.3,-1.05) {+ inlining};
    \node at (1.75,0.4) {+2};
    \node[rounded corners, shape=rectangle, draw, minimum width=38mm,
      minimum height=28mm] at (0.6,-0.4) {};
    \node at (2.25,0.4) {+9};
    \node at (-0.33,-1.55) {+ Lem.~\ref{lem:njf}};
    \node[rounded corners, shape=rectangle, draw, minimum width=38mm,
      minimum height=28mm] at (5.5,-0.4) {};
    \node at (0.5,-2.05) {\texttt{NO}: 42};
    \node at (3.0,-2.05) {\texttt{MAYBE}: 34 (27)};
    \node at (5.8,-2.05) {\texttt{YES}: 53 (60)};
    \node at (5.5,-0.4) {\begin{minipage}{35mm}
      Confluence methods\linebreak
      described in other\linebreak
      publications (see~\cite{SS15,SM15,SS16,SS_IWC16}).
      \end{minipage}};
  \end{tikzpicture}
  \caption{(Non)-confluence results on 129 oriented \CTRSs.}
  \label{fig:exp}
\end{figure}

\paragraph{Acknowledgments.} We thank Bertram Felgenhauer for providing the
{\csi}-interface to check non-joinability of two terms. Also the first author
wants to thank Vincent van Oostrom for valuable insights during the
implementation of conditional narrowing.

\bibliographystyle{plainurl}
\bibliography{references}

\end{document}